\documentclass[11pt]{article}
\usepackage[utf8]{inputenc}
\usepackage[margin=1in]{geometry}
\usepackage{amsmath,amssymb,amsthm}
\usepackage{graphicx}
\usepackage{hyperref}
\usepackage{enumitem}
\usepackage{tikz}
\usepackage{float}
\usepackage{caption}
\usepackage{subcaption}

\newtheorem{theorem}{Theorem}
\newtheorem{lemma}[theorem]{Lemma}
\newtheorem{corollary}[theorem]{Corollary}
\theoremstyle{definition}
\newtheorem{definition}{Definition}
\newtheorem{example}{Example}

\title{\textbf{Context Lake: A System Class Defined by Decision Coherence}\\
\large Correctness for Collective AI Systems\\
\vspace{0.2cm}
\large \textit{Position Paper}}

\author{
Xiaowei Jiang\\
\texttt{jiangxw@apache.org}
\thanks{The author works at Tacnode, which is building a Context Lake implementation.}
}

\date{January 2026\\
\small Preprint}

\begin{document}

\maketitle

\begin{abstract}
AI agents are increasingly the primary consumers of data, operating continuously to make concurrent, irreversible decisions. Traditional data systems designed for human analysis cycles become correctness bottlenecks under this operating regime. When multiple agents operate over shared resources, their actions interact before reconciliation is possible. Correctness guarantees that apply after the decision window therefore fail to prevent conflicts.

We introduce the \textbf{Decision Coherence Law}: \textit{for agents that take irreversible actions whose effects interact, correctness requires that interacting decisions be evaluated against a coherent representation of reality at the moment they are made.} We show that no existing system class satisfies this requirement and prove through the \textbf{Composition Impossibility Theorem} that independently advancing systems cannot be composed to provide Decision Coherence while preserving their native system classes.

From this impossibility result, we derive \textbf{Context Lake} as a necessary system class with three requirements: (1)~semantic operations as native capabilities, (2)~transactional consistency over all decision-relevant state, and (3)~operational envelopes bounding staleness and degradation under load. We formalize the architectural invariants, enforcement boundaries, and admissibility conditions required for correctness in collective agent systems.

This position paper establishes the theoretical foundation for Context Lakes, identifies why existing architectures fail, and specifies what systems must guarantee for AI agents to operate constructively at scale.
\end{abstract}

\noindent\textbf{Keywords:} distributed systems, database systems, consistency, AI agents, multi-agent systems, decision coherence, transactional systems, semantic operations

\noindent\textbf{ACM CCS Concepts:} \textbf{Information systems} $\rightarrow$ Database management system engines; Data management systems; \textbf{Computing methodologies} $\rightarrow$ Multi-agent systems; \textbf{Theory of computation} $\rightarrow$ Distributed algorithms

\section{Introduction: The Foundational Shift}

\subsection{The Primary Consumer Has Changed}

Traditional databases, data warehouses, and analytics platforms were designed for human analysis cycles. While some systems have served automated processes, the dominant design center has been retrospective analysis rather than continuous decision making under concurrency. Humans analyze data. Agents act on context. That difference fundamentally changes the requirements.

AI agents operate on millisecond decision loops: Observe. Decide. Act. Repeat. Continuously, many times per second. When they require context to make a decision, ``an hour old'' is ancient history and even ``a minute old'' is outdated. Agents require data as it exists in that instant, not as it was at the time of the last batch update.

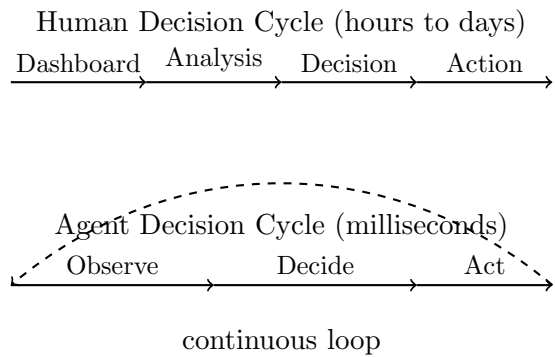
\begin{figure}[h]
\centering
\begin{tikzpicture}[scale=0.9]
\draw[->, thick] (0,2) -- (2,2) node[midway, above] {\small Dashboard};
\draw[->, thick] (2,2) -- (4,2) node[midway, above] {\small Analysis};
\draw[->, thick] (4,2) -- (6,2) node[midway, above] {\small Decision};
\draw[->, thick] (6,2) -- (8,2) node[midway, above] {\small Action};
\node[above] at (4,2.5) {Human Decision Cycle (hours to days)};

\draw[->, thick] (0,-1) -- (3,-1) node[midway, above] {\small Observe};
\draw[->, thick] (3,-1) -- (6,-1) node[midway, above] {\small Decide};
\draw[->, thick] (6,-1) -- (8,-1) node[midway, above] {\small Act};
\draw[->, thick, dashed] (8,-1) to[bend right=40] (0,-1);
\node[above] at (4,-0.5) {Agent Decision Cycle (milliseconds)};
\node[below] at (4,-1.5) {continuous loop};
\end{tikzpicture}
\caption{Human vs. Agent decision cycles. Human decisions occur in discrete cycles with tolerance for staleness. Agent decisions occur continuously with interactions before reconciliation is possible.}\label{fig:decision-cycles}
\end{figure}

\textbf{Scope.} This paper addresses systems where AI agents---not humans---are the primary consumers of data, and where agents operate \textit{constructively} such that intelligence compounds rather than remains isolated.

By ``agents'' we mean AI systems with two defining characteristics:
\begin{enumerate}[leftmargin=*]
\item \textbf{Semantic understanding:} Interpreting meaning directly from unstructured content---text, images, audio, video---at decision time.
\item \textbf{Continuous operation:} Acting in perpetual motion without batch boundaries or quiescent periods.
\end{enumerate}

For single-agent systems or independent agents with isolated contexts, the requirements established here do not apply.

\subsection{The Memory Bottleneck}

Changing the primary consumer from humans to agents does more than accelerate decision cycles. It changes where understanding must reside.

Agents act continuously and concurrently. To act correctly, an agent must reason over far more context than can be held within a single execution: recent events, evolving state, shared commitments, historical patterns, and interpretations produced by other agents. This context is not static. It evolves as agents act.

Under these conditions, the limiting factor shifts from the ability to infer or decide to the ability to retain and carry forward context as agents continue to operate.

\textbf{Model memory limitations.} Models retain information in two ways: through their parameters and through execution-local context. Parametric memory encodes statistical regularities but cannot store explicit facts about the current world, cannot be updated transactionally, and cannot be corrected atomically. Execution-local contextual memory is ephemeral and instance-bound---it exists only within that execution and is invisible to other concurrent agents.

This constraint is not unique to artificial systems. For most of human history, knowledge accumulated by one individual vanished with them unless externalized. Intelligence did not compound reliably until memory moved outside the human brain through writing, libraries, and the internet.

AI systems now face an analogous inflection point. If each agent keeps its own memory, we get millions of isolated intelligences. If memory becomes shared, breakthroughs propagate instantly across all agents at machine speed. \textbf{To compound, memory must become infrastructure.}

\subsection{What Shared Memory Enables}

Consider a warehouse fulfillment system with three autonomous agents operating concurrently: \textit{Inventory Agent} (monitors stock), \textit{Shipping Agent} (processes orders), and \textit{Restocking Agent} (receives deliveries). Each makes dozens of decisions per minute.

\begin{example}[Constructive Operation]
Three agents operate concurrently over a shared, transactional inventory state.
\begin{itemize}[leftmargin=*,noitemsep]
\item \texttt{14:23:18.300} — \textit{Restocking Agent} processes a returned unit that had been counted as available, discovers it is defective, and submits a correction to inventory.
\item \texttt{14:23:18.310} — \textit{Inventory Agent} applies the correction transactionally, updating available inventory: 2 units $\rightarrow$ 1 unit.
\item \texttt{14:23:18.350} — \textit{Shipping Agent}, processing an order that requires 2 units, sees: 1 unit available.
\item \texttt{14:23:18.400} — \textit{Shipping Agent} recognizes insufficient inventory, escalates for split shipment approval.
\end{itemize}
\end{example}

This occurs in 100~milliseconds. The system avoided an invalid inventory commitment because the agents operated over the same coherent representation of reality. Without shared context, the Shipping Agent would have committed the order against stale data.

This example is intentionally simple—all decision-relevant state is deterministic, transactional, and already shared. In real systems, the most consequential failures occur not when facts are missing, but when correct knowledge exists and cannot participate in a decision.

\subsection{When Knowledge Exists but Is Trapped in Agent Silos}

Consider an online commerce system at checkout. The system is intentionally organized with operational decisions in a database and behavioral analysis in a lakehouse. At decision time:

The \textit{checkout agent} evaluates the order using operational database state. Every signal appears normal: clean transaction history, amount within range, shipping address verified, device and location match recent activity. The agent authorizes the transaction.

Simultaneously, a \textit{behavior agent} processes clickstream data in the lakehouse. It derives a weak but meaningful pattern: direct arrival on checkout URL, purchase without browsing. By itself, this is common and non-definitive. But it is a known precursor in account-takeover scenarios when combined with an otherwise normal-looking purchase. The behavior agent records this interpretation in the lakehouse.

The checkout agent never observes this knowledge—not because it ignores behavioral data, nor because the behavior agent failed to compute it, but because the interpretation exists in a system the checkout agent cannot consult within its decision window.

The laptop ships. Thirty-six hours later, the charge is disputed. The account was compromised. The attacker kept the transaction within normal bounds, relying on the fact that the only early warning signal existed as behavioral knowledge trapped outside the checkout agent's decision context.

This failure was not caused by missing data, delayed processing, or a bad model. Knowledge was formed correctly and on time. The failure was an \textbf{agent silo}.

\subsection{The Structural Question}

When systems fail to meet agent requirements, the default response is incremental: faster pipelines, better indexes, smarter caching. But speed alone cannot close this gap. The problem is structural, not operational.

This paper establishes what agents require and why existing system classes fail to provide it. But understanding requirements is not the same as understanding purpose.

When agents act in isolation, their behavior does not compound. When multiple agents operate over shared resources or state, their decisions either reinforce one another or interfere. Reinforcement is not automatic: it occurs only when information that influences one agent's decision is visible to others whose decisions interact. Absent this condition, concurrent operation amplifies error rather than intelligence.

Under this operating regime, a fundamental question emerges:

\begin{center}
\textit{What constraint must hold for agent action to be constructive—when decisions overlap in time, interact through shared state, and produce irreversible effects?}
\end{center}

This is not a question about performance optimization or system composition. It is a question about the minimal requirement for agent operations to produce compounding intelligence rather than compounding errors.

\section{What AI Agents Require}

Agents do not merely analyze data. They act continuously, concurrently, and irreversibly in the real world. This operating regime imposes requirements on context that existing data systems were not designed to satisfy.

\subsection{Agents Require Semantic Operations}

Earlier AI systems handled meaning in bounded ways. Interpretation occurred offline during training, or was delegated to specialized subsystems operating over fixed semantic spaces. At decision time, decisions were evaluated against pre-computed features whose interpretation was stable and externally defined.

Large language models (LLMs) change this. Agents can now interpret raw inputs—text, images, audio, video—on demand, as decisions are being made. They extract intent, resolve entities, assess similarity, and classify activity based on meaning.

Agent decisions often depend on semantic operations such as:
\begin{itemize}[leftmargin=*,noitemsep]
\item interpreting free-form input to determine intent or risk
\item retrieving situations similar in meaning rather than matching identifiers
\item classifying activity into pattern families
\end{itemize}

For agents, semantic interpretation is not preprocessing or analysis—it is part of the decision itself, and deferring it changes which decision is made.

\subsection{Agents Need Derived Context}

Raw data records what happened. Decisions also depend on \textbf{derived context}: representations produced by interpreting and consolidating raw observations over time. Examples include rolling aggregates, deviations from baseline, correlated activity indicators, and similarity structures.

As agents operate continuously and decisions overlap in time, derived context accumulates and evolves alongside raw data. Both raw and derived context evolve continuously under concurrency, with no quiescent point between updates and decisions.

\subsection{Agents Require Many Retrieval Patterns}

A single agent decision typically composes multiple dependent retrievals over raw and derived context:
\begin{itemize}[leftmargin=*,noitemsep]
\item point lookups over current state
\item range scans over recent history
\item filters or aggregations over dynamically defined cohorts
\item secondary-index access over non-key attributes
\item similarity retrieval over high-dimensional representations
\item semantic retrieval based on interpreted intent and conceptual relevance
\end{itemize}

These retrievals may be causally ordered, with latencies that compose. All must complete within a bounded decision window. If any retrieval observes stale or delayed inputs relative to others, the decision is evaluated against a combination of facts that did not exist together at any single moment.

\subsection{Existing System Classes Fragment Decision-Time Context}

Current system classes divide responsibility along clean architectural boundaries:

\begin{itemize}[leftmargin=*]
\item \textbf{OLTP systems} efficiently mutate primary records but are not designed to continuously maintain large volumes of derived context or support complex, multi-pattern retrieval under high concurrency.
\item \textbf{Analytical systems} compute rich derived results at scale but operate over historical snapshots, trading freshness and concurrency for scale and query flexibility.
\item \textbf{Search engines} support efficient keyword retrieval but do not continuously maintain derived context, and make updates visible according to indexing policies rather than mutation timing.
\item \textbf{Vector databases} optimize similarity search over vector representations but do not own the additional context against which decisions are evaluated.
\end{itemize}

Each system is effective within its intended scope. The limitation arises because agent decisions must combine raw and derived context, using multiple retrieval patterns, over continuously evolving state, under concurrency, at decision time.

This requirement is sometimes conflated with Hybrid Transactional/Analytical Processing (HTAP) systems. However, HTAP has never been defined as a system class with formal semantics or invariants. It is a descriptive label for a workload pattern rather than an abstraction that defines how decisions are evaluated under concurrency.

No existing system class was designed to enforce these guarantees jointly.

\section{Decision Coherence}

\subsection{The Decision Coherence Law}

Under constructive operation, intelligence compounds: one agent's work informs others' decisions, patterns propagate across agents, and understanding accumulates in shared memory rather than remaining isolated.

But compounding is only possible when agents act on the same understanding of reality. If Agent A updates its understanding but Agent B cannot see that update---if they operate from incompatible representations---then Agent B cannot benefit from Agent A's work. Intelligence cannot compound; it fragments.

More fundamentally, when interacting decisions are evaluated against incompatible representations of reality, the system admits no single coherent execution history. Each decision may be locally justifiable relative to what it observed, yet there exists no unified history in which the joint outcomes of these decisions can be explained. Once such decisions commit irreversible effects, later reconciliation cannot retroactively restore coherence.

\begin{definition}[Decision Coherence Law]
For agents taking irreversible actions whose effects interact, operating constructively requires that interacting decisions be evaluated against a coherent representation of reality at the moment they are made.
\end{definition}

This law is foundational in scope, but its necessity follows directly from the structure of interacting decisions. We do not reduce it to more primitive laws; rather, we establish its necessity by showing that without it, the operational meaning of collective behavior is undefined rather than merely suboptimal.

\subsection{Operational Requirements}

The Decision Coherence Law defines a structural condition for constructive collective operation. This section explains how that condition must be interpreted in practice, in systems where reality evolves continuously and agent decisions have effects that extend beyond system boundaries.

From the law, three categories of requirements follow:

\subsubsection{Semantic Operations (Scope)}

The transactional scope of the system must extend beyond raw records to include semantic meaning and transformations that influence decisions. Similarity relations, inferred state, intent, and other derived interpretations are not external annotations; they participate in the same coherence regime as base data.

\textbf{Requirement.} The system must support semantic transformation and semantic retrieval as native capabilities over live state.

\textbf{Invariant.} Semantic interpretation that participates in decision evaluation is represented within the system's state and governed by the same coherence guarantees as other decision-relevant data.

\subsubsection{Consistency Properties (Safety)}

Without transactional consistency, agents may observe intermediate or partially applied changes. Such states never existed as valid system configurations, and decisions evaluated against them are undefined.

\textbf{Requirement.} All state transitions must satisfy standard transactional consistency guarantees (linearizability, serializability, snapshot isolation, or read committed).

\textbf{Invariant.} At any moment, there exists exactly one authoritative representation of state against which decisions are evaluated.

\subsubsection{Operational Envelopes (Liveness)}

Because reality evolves continuously and agent actions take effect outside transactional control, systems must define the temporal and concurrency bounds under which observations remain admissible for decision making.

\textbf{Temporal Envelope ($\Delta$).} Agents operating constructively over shared context must adopt a system-wide temporal bound $\Delta$ that limits the admissible gap between retrieval time and decision time for any decision with shared effects.

For every observation used in a decision: $\text{decision\_time} - \text{retrieval\_time} < \Delta$.

\textbf{Concurrency Envelope ($C$).} Agents operating constructively over shared context must adopt a system-wide concurrency level $C$, such that transactional consistency and the temporal bound $\Delta$ are maintained for all decisions under sustained concurrent operation at that level.

\subsection{Necessity and Sufficiency}

Decision Coherence requires the joint satisfaction of semantic operations, transactional consistency, and operational envelopes. These constraints address different failure modes and are independent but inseparable: each is necessary, and none is sufficient on its own.

\begin{itemize}[leftmargin=*]
\item \textbf{Transactional Consistency without Temporal Envelope} produces coherent but stale reality. Agents agree on what was true, not what is true now.
\item \textbf{Transactional Consistency without Concurrency Envelope} produces correct transactions that cannot complete within decision windows under load.
\item \textbf{Temporal + Concurrency without Transactional Consistency} produces fast but incoherent views. Agents see recent state but divergent versions of it.
\item \textbf{All three without Semantic Operations} produces identical data with incompatible interpretations.
\end{itemize}

\section{The Necessity of the Context Lake}

\subsection{Composition Limits}

\begin{lemma}[Visibility Gating Necessity]\label{lem:visibility}
Consider multiple systems, each determining when state becomes visible according to internal system policy. An external coordinator cannot guarantee cross-system atomic visibility of decision-relevant effects while keeping the participating systems within their native system classes unless every participating system exposes a visibility-gating capability to:
\begin{enumerate}[leftmargin=*,noitemsep]
\item durably accept decision-relevant effects while keeping them invisible, and
\item later make those effects visible or abort them based on an explicit external decision.
\end{enumerate}
\end{lemma}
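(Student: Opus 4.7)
The plan is to prove this by contraposition: assume some participating system $S_k$ lacks the visibility-gating capability, and show that any external coordinator either fails to guarantee cross-system atomic visibility or must interfere with $S_k$'s internal visibility policy in a way that removes $S_k$ from its native system class. I would first formalize the setup: each participating system $S_i$ has an internal visibility policy $\pi_i$ that maps each durably accepted effect to the earliest moment at which it becomes observable to readers of $S_i$, and ``native system class'' is characterized by the externally observable behaviors that $\pi_i$ can produce without external override. Cross-system atomic visibility of a decision-relevant effect set $E$ means there is no reader schedule in which some effect in $E$ is observed on one system while another effect in $E$ remains unobserved on a different system after the global decision has been made.

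Next, I would enumerate the options available to $S_k$ when the coordinator hands it an effect $e \in E$. Because $S_k$ lacks gating, it cannot satisfy requirement~(1) of the lemma, so either (a)~$S_k$ refuses to accept $e$ durably until after the global commit decision, in which case a crash between the coordinator's decision and $S_k$'s acceptance loses $e$ and the ``all visible'' branch of atomicity fails; or (b)~$S_k$ accepts $e$ durably, in which case by hypothesis its visibility is controlled entirely by $\pi_k$, and $e$ may become observable to readers of $S_k$ on $\pi_k$'s timeline, independent of the coordinator's decision or the state of the other systems. I would then construct the adversarial schedule for case~(b): let the coordinator subsequently choose to abort, or let $\pi_k$ expose $e$ before a sibling system $S_j$ has even received its corresponding effect; in both situations a reader can witness a partial state that violates atomic visibility. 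The only remaining way for the coordinator to prevent this is to reach into $\pi_k$ and override when $e$ becomes visible, but that is precisely the gating capability whose absence we assumed, and installing it externally changes $S_k$'s observable behavior and thus its native system class.

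For completeness I would note the converse direction briefly: if every $S_i$ exposes both clauses of the gating capability, a standard two-phase protocol (\emph{prepare} = clause~1, \emph{commit/abort} = clause~2) realizes atomic visibility, so the two clauses are exactly the minimal interface sufficient as well as necessary. Only the necessity direction is claimed by the lemma, but observing sufficiency clarifies that the two clauses are not an accident of a particular protocol family.

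The main obstacle I expect is the definitional boundary of ``native system class.'' A system could in principle realize gating behavior without exposing it under that name, for example by quietly staging writes in a shadow structure that becomes visible on a later signal. I would handle this by arguing that any mechanism which durably accepts effects yet defers observation until an externally supplied signal is \emph{operationally equivalent} to exposing the two-part gating capability, so the lemma is a statement about required capability rather than about a specific API shape. A secondary subtlety is pinning down ``decision-relevant effects'' tightly enough that internal timing choices (MVCC snapshot assignment, index publication, commit-log fsync ordering) count as parts of $\pi_i$, while genuinely invisible implementation details (buffer pool residency, compaction) do not; I would resolve this by defining $\pi_i$ purely in terms of what a conforming reader of $S_i$ can observe, which keeps the argument anchored to the externally meaningful notion of visibility used throughout the paper.
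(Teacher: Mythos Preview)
Your contrapositive write-time argument is correct and in fact more carefully worked out than the paper's sketch: the case split into (a) defer durable acceptance and (b) accept durably under $\pi_k$, together with the adversarial schedule and the observation that any override of $\pi_k$ \emph{is} the gating capability, cleanly establishes the write-time half. Your handling of the ``native system class'' boundary via operational equivalence is also a useful clarification the paper leaves implicit.

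However, your case analysis is incomplete. You frame the coordinator's options entirely around what happens ``when the coordinator hands [$S_k$] an effect $e$,'' which presupposes that atomic visibility is being enforced at write time. The paper's proof sketch is organized around an explicit exhaustive split --- ``atomic visibility must be enforced at write-time or read-time'' --- and treats the second branch separately. In the read-time branch, the coordinator does not attempt to control when effects become visible inside each $S_i$; instead it lets each system advance under its own $\pi_i$ and tries to answer every decision-time read against a shared cross-system cut. Your argument does not exclude this strategy: nothing in your cases (a)/(b) prevents a coordinator from interposing on reads rather than writes, and such interposition does not obviously override $\pi_k$. The paper closes this branch with a distinct argument: because the $\pi_i$ are not mutually comparable, any shared cut must be defined in an external coordinate (e.g., event time), which forces all mutations to be represented as immutable event-time facts and degenerates each participating system into an append-only log --- i.e., it exits its native system class by a different route than the write-time override you identify. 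Without this second branch, a reader of your proof can legitimately ask why snapshot-style read coordination does not give a counterexample to the lemma.
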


\begin{proof}[Proof sketch]

Atomic visibility must be enforced at write-time or read-time. 

\textit{Write-time:} In a system lacking visibility-gating, once an effect is durably applied it becomes visible according to internal system policy. In any execution without instantaneous, globally synchronized visibility, there exists an interval in which an effect has become visible in one system while the corresponding effect has not yet become visible in another, violating atomic visibility.

\textit{Read-time:} Another approach to enforcing atomicity is to read each system “as of” a shared cut at query time. However, each system determines when state becomes visible according to its own internal policy, and these visibility boundaries are not shared or comparable across systems. Any shared cut must therefore be defined in an external coordinate system (e.g., event time). Once coherence is defined relative to an event-time cut, mutation operations are no longer native---all changes must be represented as immutable event-time facts. Consequently, the systems degenerate into append-only logs with state semantics implemented externally.
\end{proof}

\begin{definition}[Independently Advancing System]\label{def:independent}
A system is independently advancing if it determines when durable state becomes visible according to internal system policy and does not expose any mechanism by which visibility of those effects can be made contingent on an external boundary.
\end{definition}

This describes the default behavior of modern infrastructure: search engines advance via indexing cycles; caches via time-to-live (TTL) and invalidation; analytical systems via batch refresh; replicas via replication lag; vector databases via asynchronous indexing.

\begin{theorem}[Composition Impossibility]\label{theo:impossibility}
Decision coherence under continuous mutation cannot be achieved by composing independently advancing systems (Definition~\ref{def:independent}) while preserving the independently advancing system class, unless a single system already enforces decision coherence as a non-bypassable property.
\end{theorem}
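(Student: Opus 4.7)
The plan is to argue by contradiction, using Lemma~\ref{lem:visibility} as the pivot. Suppose a composition of independently advancing systems (Definition~\ref{def:independent}) achieves Decision Coherence under continuous mutation while each participant remains in its native class. First I would reduce the decision-coherence requirement to a cross-system atomic-visibility requirement: the Decision Coherence Law demands that interacting decisions be evaluated against a single coherent representation, so any decision-relevant effect whose consequences span more than one participating system must become visible atomically across those systems. If not, there is an interval during which the effect is visible in one participant but not another, and an agent reading both during that interval evaluates its decision against a combination of facts that did not jointly exist in any coherent state---precisely what the law forbids. Continuous mutation rules out the escape of waiting for a quiescent window in which the discrepancy could be reconciled before any interacting decision is made.

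Next I would apply Lemma~\ref{lem:visibility} to the cross-system atomic-visibility requirement obtained above. The lemma then asserts that every participating system must expose a visibility-gating capability: durable-but-invisible acceptance of decision-relevant effects, plus externally controlled commit or abort. This is the structural pivot of the argument, because it transfers the obligation from \emph{some external coordinator can reconcile things after the fact} to \emph{each participating system must internally expose a specific external-control surface}. The last step is to invoke Definition~\ref{def:independent} directly: an independently advancing system, by definition, exposes no mechanism by which visibility of its durable effects can be made contingent on an external boundary. A system that exposes visibility-gating therefore no longer belongs to the independently advancing class, contradicting the assumption that composition preserves it.

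The escape clause---\emph{unless a single system already enforces decision coherence as a non-bypassable property}---I would handle as a separate case rather than an exception to the main argument. If one participant already holds all decision-relevant state and enforces coherence non-bypassably, then coherence is internal to that system and the ``composition'' is vacuous with respect to coherence: the remaining participants may stay independently advancing precisely because they carry no coherence obligation. The theorem's force is therefore in the non-degenerate case, where decision-relevant state is actually split across participants and no single system is positioned to gate the entire decision-relevant scope; in that case the contradiction above applies without exception.

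The main obstacle I expect is the first reduction: rigorously deriving cross-system atomic visibility from the Decision Coherence Law under continuous mutation, without implicitly assuming the conclusion. The law is phrased in terms of ``a coherent representation of reality at the moment decisions are made,'' which must be translated into a precise statement about simultaneous visibility boundaries across independent subsystems. The crux is to use continuous mutation to exclude any quiescent reconciliation window, so that over any sufficiently long execution, non-atomic cross-system visibility necessarily produces at least one interacting decision that observes an incoherent combination. Once that reduction is in place, the remainder of the argument chains cleanly through Lemma~\ref{lem:visibility} and Definition~\ref{def:independent}.
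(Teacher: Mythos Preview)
Your proposal is correct and follows essentially the same route as the paper: reduce Decision Coherence under continuous mutation to a cross-system atomic-visibility requirement, invoke Lemma~\ref{lem:visibility} to force every participant to expose visibility-gating, and then observe that Definition~\ref{def:independent} excludes exactly that capability, with the single-system escape clause handled as the degenerate case. If anything, you are more explicit than the paper about why continuous mutation is needed (to preclude a quiescent reconciliation window) and about the non-triviality of the first reduction, which the paper simply asserts.
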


\begin{proof}

Under continuous mutation, decision coherence requires that all decision-relevant effects become visible atomically across the systems that participate in the decision, within the decision window.

By Lemma~\ref{lem:visibility}, cross-system atomic visibility of such effects cannot be guaranteed unless every participating system exposes a visibility-gating capability. By Definition~\ref{def:independent}, an independently advancing system exposes no such capability.

It follows that composing independently advancing systems while preserving the independently advancing system class cannot satisfy the necessary condition for decision coherence under continuous mutation.

The only exception is when a single system already enforces decision coherence as a non-bypassable property, so that decisions do not require cross-system atomic visibility guarantees from a composition.
\end{proof}

\begin{corollary}[Authority Localization]
Under continuous mutation, decision coherence can be enforced only within a single system boundary. It cannot be synthesized by composing multiple systems without degeneration of their system classes.
\end{corollary}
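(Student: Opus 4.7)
The plan is to treat this corollary as a direct restatement of Theorem~\ref{theo:impossibility}, separating it into two claims that I would prove independently: (a) decision coherence under continuous mutation \emph{can} be enforced when authority is localized inside a single system boundary, and (b) it \emph{cannot} be synthesized by multi-system composition without degenerating the participating systems' classes. Both claims follow from machinery already in place, so the work is structural rather than technical.

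For claim (b), I would argue by contraposition. Suppose a composition of multiple systems enforces decision coherence under continuous mutation while every participating system retains its native class. If each such system is independently advancing in the sense of Definition~\ref{def:independent}, Theorem~\ref{theo:impossibility} directly rules this out. So at least one participant must not be independently advancing---which by the definition means it must expose a mechanism by which visibility of durable state can be made contingent on an external boundary. That mechanism is exactly the visibility-gating capability in Lemma~\ref{lem:visibility}: durable acceptance of effects held invisible, followed by externally-triggered commit or abort. A system that exposes this capability no longer determines visibility by internal policy alone, which is precisely what takes it out of classes such as search engines, caches, analytical stores, replicas, and vector databases as characterized earlier in the paper. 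That shift is the intended meaning of class degeneration, contradicting the hypothesis.

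For claim (a), the witness is the exception clause carved out by Theorem~\ref{theo:impossibility} itself: a single system that enforces decision coherence as a non-bypassable property. Because the boundary of enforcement coincides with the boundary of a single system, no cross-system atomic visibility is required, and the impossibility argument does not bite. This establishes that the corollary's ``only within a single system boundary'' clause is not vacuous, so the two-part statement is consistent.

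The main obstacle I expect is making \emph{degeneration of system class} precise enough to carry the argument. The paper names several system classes informally (OLTP, analytical, search, vector) and characterizes their default behavior as independently advancing, but does not supply a formal definition of ``class.'' My plan is to lean on Definition~\ref{def:independent} as the operative property and treat ``preserving the native system class'' as, at minimum, retaining independent advancement. Under that reading, the corollary is essentially a repackaging of Theorem~\ref{theo:impossibility}; a stricter reading would demand an extended taxonomy, which is outside the scope of the corollary and can be flagged as future work rather than addressed in the proof.
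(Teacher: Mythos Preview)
Your proposal is correct and matches the paper's approach: the paper offers no separate proof for the corollary, treating it as an immediate restatement of Theorem~\ref{theo:impossibility}, and your plan to unpack it via that theorem (with ``preserving native system class'' read as retaining independent advancement per Definition~\ref{def:independent}) is exactly the intended reading. One minor note: your claim~(a) is not actually asserted by the corollary---the ``only within a single system boundary'' clause is purely a necessity statement, and neither the corollary nor Theorem~\ref{theo:impossibility} establishes that such a single system exists---so you can safely drop (a) or reframe it as a consistency remark rather than part of the proof obligation.
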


\subsection{Survey of Existing System Classes}

We now examine why each existing system class individually fails to satisfy Decision Coherence.

\textbf{Relational Databases} provide transactional consistency, but not decision-time temporal or concurrency envelopes over derived state or ad-hoc retrieval patterns. They lack native support for semantic operations.

\textbf{NoSQL/Document Stores} provide scale and flexibility, but offer weaker consistency guarantees, restricted retrieval patterns, and lack native semantic operations.

\textbf{Search Engines} provide flexible full-text retrieval but lack transactional consistency. Index refresh introduces unavoidable staleness.

\textbf{Data Warehouses} provide shared analytical snapshots but lack live ingestion, live transformation, and low-latency retrieval under concurrency.

\textbf{Data Lakes and Lakehouses} provide shared storage at rest but lack live ingestion, live transformation, and low-latency retrieval under concurrency.

\textbf{Stream Processing Systems} provide live ingestion and transformation, but are not designed to serve context—raw or derived—for low-latency retrieval.

\textbf{Vector Databases} provide semantic similarity retrieval but support only this single access pattern and lack general-purpose retrieval capabilities.

Each system class fails to satisfy Decision Coherence on its own, even though every individual requirement of Decision Coherence exists somewhere today. By the Composition Impossibility Theorem~\ref{theo:impossibility}, no composition of existing system classes satisfies Decision Coherence.

\section{Context Lake: The System Class Defined by Decision Coherence}

\subsection{Definition}

\begin{definition}[Context Lake]
A Context Lake is a system that enforces Decision Coherence by:
\begin{enumerate}[leftmargin=*,noitemsep]
\item maintaining a single authoritative logical representation of reality for decisions,
\item ensuring timely visibility of all decision-relevant mutations,
\item preserving guarantees under sustained concurrency,
\item executing semantic interpretation and retrieval natively and authoritatively.
\end{enumerate}
\end{definition}

A system qualifies as a Context Lake if and only if it enforces Decision Coherence: transactional consistency over all decision-relevant state, maintained within declared temporal and concurrency envelopes, with semantic operations executed as native capabilities.

\textbf{Relationship to existing classes.} Context Lake requirements form a strict superset of existing system class requirements. A Context Lake can subsume the roles of a database, warehouse, search engine, and vector store, while additionally enforcing the constraints required for Decision Coherence.

A system may initially fall outside the Context Lake class. As the system evolves to enforce the invariants required for Decision Coherence, it becomes a Context Lake by definition. This classification depends on enforced invariants rather than prior system class, lineage, or architectural resemblance.

\subsection{Context Preparation and Context Retrieval}

The purpose of a Context Lake is to provide the right context at the moment a decision is made. Context participates in a Context Lake in two distinct phases:

\textbf{Context Preparation} organizes experience into memory that may later support decisions. This phase transforms raw observations into representations that may be relevant for future decisions through interpretation, consolidation, and structuring.

\textbf{Context Retrieval} is when prepared or on-demand context is retrieved to evaluate and decide action. In this phase, the system retrieves the specific subset of context required to evaluate a decision predicate. Retrieved context determines the outcome of the decision and constrains which actions are admissible.

\subsection{Context Lake in the Decision Loop}

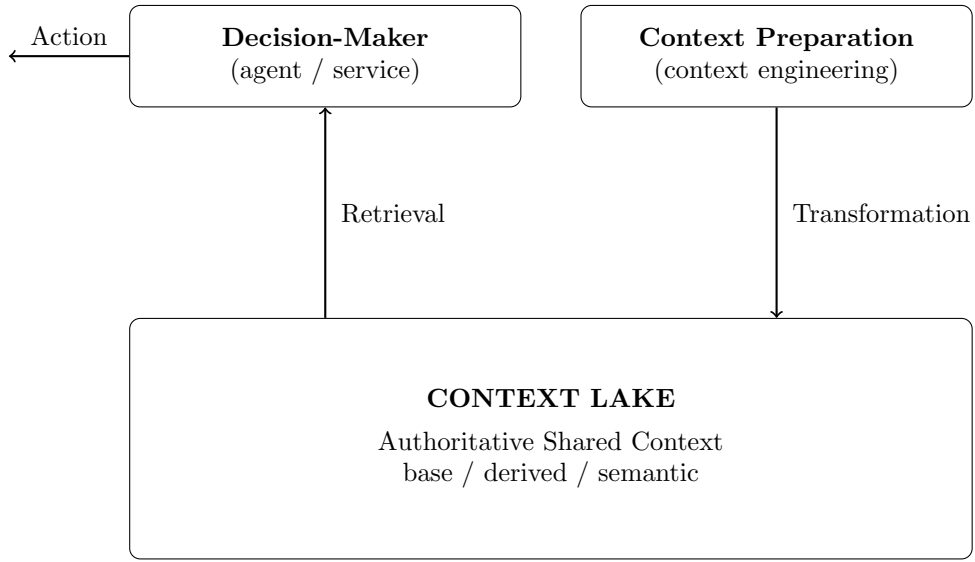
\begin{figure}[t]
\centering
\begin{tikzpicture}[
  font=\small,
  box/.style={
    draw,
    rounded corners,
    minimum width=5.2cm,
    minimum height=1.35cm,
    align=center
  },
  lake/.style={
    draw,
    rounded corners,
    minimum width=11.2cm,
    minimum height=3.2cm,
    align=center
  },
  arrow/.style={->, thick}
]

\node[lake] (lake) at (0,-0.8) {
  \textbf{CONTEXT LAKE}\\[4pt]
  Authoritative Shared Context\\
  \small base / derived / semantic
};

\node[box, anchor=south west] (decision)
  at ([yshift=2.8cm] lake.north west) {
  \textbf{Decision-Maker}\\
  (agent / service)
};

\node[box, anchor=south east] (prep)
  at ([yshift=2.8cm] lake.north east) {
  \textbf{Context Preparation}\\
  (context engineering)
};

\coordinate (lakeTopUnderDecision) at (lake.north -| decision.south);
\coordinate (lakeTopUnderPrep) at (lake.north -| prep.south);

\draw[arrow] (lakeTopUnderDecision) -- (decision.south)
  node[midway, right, xshift=2pt]{Retrieval};

\draw[arrow] (prep.south) -- (lakeTopUnderPrep)
  node[midway, right, xshift=2pt]{Transformation};

\draw[arrow] (decision.west) -- ++(-1.6,0)
  node[midway, above]{Action};

\end{tikzpicture}
\caption{Context Lake as the authoritative shared substrate for decision coherence.}\label{fig:context-lake-architecture}
\end{figure}

A Context Lake sits between experience and action (Figure~\ref{fig:context-lake-architecture}). It provides a shared substrate through which
decision-makers obtain the context required to evaluate and make decisions under real-world
constraints. The Context Lake does not prescribe how decisions are made. Instead, it defines how
experience is organized into decision-relevant context and how that context is retrieved at the
boundary. Decisions are evaluated against context at retrieval time.

Decision-makers interact with a Context Lake in exactly two ways: by retrieving context to evaluate
actions, and by contributing experience through their actions.

An action may take effect in the external world (e.g., presenting a recommendation) or contribute
new experience by updating shared context. All decision logic remains external to the system class.

\section{Context Engineering: Making Shared Memory Coherent}

A Context Lake provides infrastructure. Context Engineering concerns context preparation: the organization of memory prior to context retrieval.

\subsection{Memory Layers: The Canonical Structure}

Context Engineering organizes memory into three distinct layers, each with different mutability contracts, write authorities, and roles in supporting decision making.

We introduce these layers---episodic, semantic, and state memory---and show how
Context Engineering governs the transformations between them, defining how recorded
experience informs shared interpretation and operative conditions.

\subsubsection{Episodic Memory (Append-Only)}

\textbf{Invariant:} Episodic memory preserves observed experience as it occurred. Past experience is never revised to change what was observed.

\textbf{Mutability:} Immutable. Episodes are not revised in place. Corrections are recorded as new episodes.

\textbf{Contents:} Events, logs, messages, traces, and other raw observations.

\textbf{Role:} Foundation of memory. The authoritative record of observation from which meaning and state are derived.

\subsubsection{Semantic Memory (Governed)}

\textbf{Invariant:} Semantic memory represents shared interpretation of experience. Interpretations may evolve, but they are maintained as shared context.

\textbf{Governed By:} Written only by Context Engineering through explicit, versioned semantic transformations.

\textbf{Mutability:} Mutable by design. Interpretations may be revised as evidence, models, or definitions improve, without altering underlying episodic memory.

\textbf{Contents:} Interpreted signals—entity resolutions, sentiment classifications, extracted concepts, causal links, relationships.

\textbf{Role:} Interprets experience. Bridges raw observation and interpretation by answering ``What does this mean?'' without prescribing action.

\subsubsection{State Memory (Mutable)}

\textbf{Invariant:} State memory represents the current set of operative conditions used to evaluate decisions.

\textbf{Governed By:} Governed by Context Engineering. Agents, applications, and policies may update state by executing permitted, well-defined transitions under shared consistency and validation rules.

\textbf{Mutability:} Mutable by design. State evolves as conditions change. Updates are applied with transactional guarantees under concurrency.

\textbf{Contents:} Authoritative present-tense facts such as current status flags, thresholds, quotas, counters, and active constraints.

\textbf{Role:} Authoritative for action evaluation. Agents query state memory when evaluating and executing decisions, applying updates transactionally against it.

\subsection{Why These Layers Are Distinct}

The three memory layers represent fundamentally different kinds of information with different lifecycle requirements. Episodic memory records what was observed---changing it would falsify history. Semantic memory records what observations mean---it must be mutable because understanding evolves. State memory records what is operative now---it must be mutable because conditions change, and authoritative because agents require a single source of conditions when evaluating actions.

Collapsing observation, interpretation, and state into a single undifferentiated layer leads to rewriting or versioning history (as interpretations evolve), and to blurring intermediate analysis with decision-ready truth. These failures are structural consequences of conflating concerns.

\section{Agent Decision Admissibility Conditions}

Context Lake provides transactional consistency and bounded temporal and concurrency envelopes over decision-relevant context.
These guarantees are necessary but not sufficient on their own for Decision Coherence: they define the conditions under which coherent decisions are possible.
Whether coherence is realized in practice further depends on how agents evaluate and act on that context.

This section defines closure conditions that constrain agent behavior and decision logic. These conditions are not enforced automatically by Context Lake. Rather, they specify the admissibility requirements that agent decisions must satisfy to be considered coherent.

\subsection{Elimination of Private Decision Premises}

\textbf{Condition:} No decision with shared effects may be justified by decision-relevant context that is private to a single agent.

When decision-relevant context is agent-local, there is no guarantee that interacting decisions are evaluated against the same premises. For a decision with shared effects to be meaningful, all context that influences its admissibility must be part of shared context.

\subsection{Elimination of Mixed Causal Cuts}

\textbf{Condition:} A single decision may not be evaluated against multiple causal cuts of context.

If a decision is evaluated using context drawn from multiple causal cuts, the decision is evaluated against a world that never existed. Such a decision has no coherent model under which it could be justified.

\subsection{Elimination of Deferred Action}

\textbf{Condition:} An agent may not take an action with shared effects based on context retrieved outside its admissible decision window.

Context retrieved from a Context Lake is authoritative at the moment it is observed. However, if an agent delays action beyond the temporal envelope under which that context remains admissible, the decision is no longer evaluated against the reality in which its effects take place.

\subsection{Elimination of Implicit Semantics}

\textbf{Condition:} Meaning that influences a decision must be explicit and shared.

If interpretation exists only within application logic, model prompts, or agent-local behavior, identical observations may yield incompatible interpretations across agents. Decisions may then diverge despite operating over the same underlying data.

\section{Related Work}

We relate Context Lake to prior work across several areas of data management and distributed systems.

\textbf{Transaction processing.} The ACID properties~\cite{haerder1983principles} define correctness for transactional systems. Our work extends transactional guarantees to decision-time semantics, where the evaluation of a decision and its effects may not occur within a single transaction boundary.

\textbf{Distributed consistency.} The CAP theorem~\cite{brewer2000towards,gilbert2002brewer} establishes fundamental trade-offs in distributed systems. Context Lake requires strong consistency (CP systems) but adds temporal and concurrency envelopes as additional requirements.

\textbf{Eventual consistency and CRDTs.} Systems like Dynamo~\cite{decandia2007dynamo} and CRDTs~\cite{shapiro2011comprehensive} provide eventual consistency with convergence guarantees. Decision Coherence requires stronger guarantees---decisions must observe consistent state at decision time, not eventually.

\textbf{CALM and monotonicity.} The CALM theorem~\cite{hellerstein2010declarative,alvaro2011consistency} shows that monotonic programs can be executed without coordination. Decision Coherence addresses a complementary problem: when decisions are non-monotonic and interact, what must the underlying system guarantee?

\textbf{Stream processing and dataflow.} Systems like Naiad~\cite{murray2013naiad}, Apache Flink~\cite{carbone2015flink}, Spark Streaming~\cite{zaharia2013discretized}, and Dataflow~\cite{akidau2015dataflow} provide live computation over evolving data. 
These systems compute derived context but do not serve it for low-latency retrieval at high concurrency.

\textbf{Data lakehouses.} Delta Lake~\cite{armbrust2020delta} and similar systems unify lakes and warehouses for analytics workloads. Context Lake extends this concept to operational agent decision making with decision coherence guarantees.

\textbf{HTAP systems.} HTAP systems aim to support transactional and analytical workloads within a single platform. However, HTAP lacks formal semantics: it is introduced as a workload characterization rather than a system class with defined correctness properties.

\textbf{HSAP systems.} Hybrid Serving and Analytical Processing (HSAP) systems~\cite{jiang2020hsap} focus on integrating low-latency serving with analytical workloads. HSAP addresses performance and data access at the serving boundary, but does not define correctness semantics for concurrent, irreversible decisions over shared decision-time context.

\textbf{Multi-agent systems.} Classical multi-agent systems research~\cite{wooldridge2009introduction} addresses coordination, communication, and collective behavior. Our work provides a data systems foundation for multi-agent correctness at scale.

\section{Discussion and Future Work}

\subsection{Implementation Challenges}

Implementing a Context Lake presents significant engineering challenges:

\textbf{Achieving temporal envelopes.} Maintaining bounded staleness for derived context under continuous mutation and high concurrency requires incremental maintenance strategies that preserve consistency.
For semantic operations, this challenge is amplified when evaluation requires external computation, such as calls to large language models. In such cases, LLM-mediated interpretations must remain within the admissible temporal envelope.

\textbf{Supporting diverse retrieval patterns.} A Context Lake must support all decision-relevant retrieval patterns effectively, including point lookups, range scans, ad-hoc filtering, similarity search, and semantic retrieval, without violating transactional guarantees or temporal envelopes.

\textbf{Scalability.} Context Lakes must scale to support many concurrent agents operating over large volumes of raw and derived context. Standard database scaling techniques may not suffice.

\textbf{Workload isolation.} Context preparation and context retrieval operate concurrently over the same shared Context Lake. Isolating these workloads without delaying admissible retrieval is challenging: preparation must not block decision-time reads, while retrieval must not observe partially updated or out-of-envelope derived context.

\subsection{Open Questions}

We conclude by highlighting several open questions raised by the Context Lake model.

\textbf{Relaxed consistency models.} Can weaker consistency models (e.g., causal consistency) suffice for certain agent interaction patterns? Under what conditions?

\textbf{Independence of decisions.} When decisions do not interact, they need not observe the same causal cut. Can systems exploit independence to improve performance while preserving Decision Coherence for interacting decisions?

\textbf{Semantic accuracy vs.\ performance trade-offs.} How do accuracy-performance trade-offs for semantic operations manifest under decision-time constraints?

\textbf{Admissibility enforcement.} How should systems detect and prevent actions that violate admissible decision windows, causal cuts, or shared semantic premises at runtime?

\subsection{Evaluation Agenda}

Validating the Context Lake framework requires:

\begin{enumerate}[leftmargin=*]
\item \textbf{Prototype implementation} demonstrating that semantic operations, consistency guarantees, and operational envelopes can be satisfied simultaneously.
\item \textbf{Performance characterization} establishing achievable temporal and concurrency envelopes under realistic workloads.
\item \textbf{Case studies} showing real-world scenarios where Context Lakes prevent failures that arise in loosely composed or multi-system architectures.
\item \textbf{Comparative evaluation} quantifying benefits over alternative architectures under comparable workloads.
\end{enumerate}

\section{Conclusion}

This paper identifies Decision Coherence as a fundamental requirement for correctness when autonomous AI agents make concurrent, irreversible decisions over shared resources. We show that no existing system class satisfies this requirement and prove through the Composition Impossibility Theorem that independently advancing systems cannot be composed to provide decision coherence while preserving their native system classes.

From this impossibility result, Context Lake emerges as a necessary system class, defined by three requirements: native semantic operations, transactional consistency over all decision-relevant state, and operational envelopes that bound staleness and degradation under load.

We formalize the architectural invariants, enforcement boundaries, and admissibility conditions required for correctness in collective agent systems. While significant implementation challenges remain, the theoretical foundation establishes that Context Lakes are not an optimization or architectural preference, but a structural necessity for agent systems operating constructively at scale.

As AI agents become increasingly prevalent as the primary consumers of data systems, the principles established here will shape the next generation of data infrastructure—systems designed not for human analysis, but for collective machine intelligence.

\section*{Acknowledgments}
I thank colleagues for thoughtful discussions and feedback during the development of this work.

\bibliographystyle{plain}

\end{document}